\newtheorem{theorem}{Theorem}
\newtheorem{definition}[theorem]{Definition}
\newtheorem{remark}[theorem]{Remark}
\newenvironment{proof}[1][Proof]{\noindent\textbf{#1.} }{\ \rule{0.5em}{0.5em}}
\begin{document}

\begin{frontmatter}

%% Title, authors and addresses

%% use the tnoteref command within \title for footnotes;
%% use the tnotetext command for theassociated footnote;
%% use the fnref command within \author or \address for footnotes;
%% use the fntext command for theassociated footnote;
%% use the corref command within \author for corresponding author footnotes;
%% use the cortext command for theassociated footnote;
%% use the ead command for the email address,
%% and the form \ead[url] for the home page:
%% \title{Title\tnoteref{label1}}
%% \tnotetext[label1]{}
%% \author{Name\corref{cor1}\fnref{label2}}
%% \ead{email address}
%% \ead[url]{home page}
%% \fntext[label2]{}
%% \cortext[cor1]{}
%% \affiliation{organization={},
%%             addressline={},
%%             city={},
%%             postcode={},
%%             state={},
%%             country={}}
%% \fntext[label3]{}

\title{Entropy of spatial network with applications to non-extensive statistical mechanics}
%\tnotetext[t1]{This document is the results of the research
%project funded by the National Science Foundation.}
%\tnotetext[t2]{The second title footnote which is a longer text matter to fill through the whole text width and overflow into another line in the footnotes area of the first page.}

%% use optional labels to link authors explicitly to addresses:
%% \author[label1,label2]{}
%% \affiliation[label1]{organization={},
%%             addressline={},
%%             city={},
%%             postcode={},
%%             state={},
%%             country={}}
%%
%% \affiliation[label2]{organization={},
%%             addressline={},
%%             city={},
%%             postcode={},
%%             state={},
%%             country={}}

\author[a]{O.K. Kazemi}
\ead{o.kazemi@stu.umz.ac.ir}
\affiliation[a]{organization={University of Mazandaran},
%addressline={Radarweg 29},
%postcode={1043 NX},
city={Babolsar},
country={Iran}}

\author[b]{ S.M. Taheri\corref{corb}}
\ead{sm_taheri@ut.ac.ir}
\affiliation[b]{organization={School of Engineering Sciences, College of Engineering},
addressline={University of Tehran},
postcode={1417613131},
city={Tehran},
country={Iran}}

\cortext[corb]{Corresponding author}

\begin{abstract}
A new method is proposed for analyzing complexity and studying the information in random geometric networks using Tsallis entropy tool. Tsallis entropy of the ensemble of random geometric networks is calculated based on the components of the random connection model on the point process which is obtained by connecting the points with a probability that depends on their relative positions (10.1016/j.indag.2022.05.002, 2022). According to information theory and conditional discussion, the bounds for Shannon and Tsallis entropies of the ensemble of this random graph are presented. Using this function and Lagrange's formula, the connection function that provides the maximum Tsallis entropy based on general constraints is obtained. Then, a simulation-based example is presented to clarify the application of the proposed method in the study of ad hoc wireless networks. By observing the obtained results, it can be stated that the wireless networks that adhere to the model studied here are almost maximally complex. Also, Tsallis conditional entropy maximizing function is compared with other connection functions using numerical calculations and the optimal value for the maximization of conditional entropies is obtained.
\end{abstract}

%%Research highlights

\begin{keyword}
Entopy \sep Stochastic geometry\sep Soft random geometric graph
%% keywords here, in the form: keyword \sep keyword

%% PACS codes here, in the form: \PACS code \sep code

%% MSC codes here, in the form: \MSC code \sep code
%% or \MSC[2008] code \sep code (2000 is the default)

\end{keyword}

\end{frontmatter}

%% \linenumbers

%% main text
\section{ Introduction and bibliography}
The problem of graph entropy has a special importance and application in investigating the structure of networked systems. Consider the points placed in space according to a point process. For any two points $x$ and $y$, we insert an edge between these points with probability $g(|x-y|)$, independently of all other pairs of points. Here, $|x-y|$ denotes the Euclidean distance between $x$ and $y$. The connection function  $g: \mathbb{R}^d\rightarrow [0,1]$, $d \geq 2$ is a measurable symmetric function, also called the activation function \cite{AKB}. These edge connections lead to the formation of clusters of points, which is called as a soft random geometric graph (SRGG) \cite{41}. The SRGG or random connection model \cite{MH} of continuum percolation is a generalization of random geometric graph%{VJTH}
 with two random sources, the point locations and their connections. The probability of existence of an edge between two points decreases as the distance between them increases \cite{94}. This model is quite general and has applications in different branches of science. In epidemiology, the probability that an infected herd at location $x$ infects another herd at location $y$ ; in telecommunications, the probability that two transmitters are non-shaded and can exchange messages ; in biology, the probability that two cells can sense each other \cite{87,60}. Also, this and related models have been studied in the contexts of geometric probability, statistics and physics \cite{80}. Examples of SRGG current application include modelling the collective behavior of multi-robot swarms, % food webs, roadways, infrastructure networks,
 brain connectivity and neural networks,  social networks, disease surveillance, electrical smart grid engineering and communication theory, where random geometric graphs have recently been used to model wireless networks, sharing information over communication channels which have a complex, stochastic impulse response \cite{New, 93, 92}.

In mathematical or physical discussions, there are several connection functions. The special case of the connection function is of the Boolean, zero$-$one type. Model made of this connection function, also known as the unit disk or Gilbert graph or Boolean model, was studied for the first time by Gilbert \cite{01} as a model of communication networks \cite{MH}. Such the special model of the SRGG is the simplest Boolean model of continuum percolation in stochastic geometry and percolation theory \cite{CSKM}, which is also known as the most basic random geometric graph and a central object in random graph theory. Another common choice of the connection function is the so-called Reyleigh fading connection function.
%, i.e., $g(x)=\beta \exp{-({|x|}/x_0)^\eta}$ \cite{AKB}. Where $\eta>0$ is related to the path loss exponent; $\beta\in (0,1]$ specifies the edge probability for short-range connections
 This form of connection function is one of the most widely used connection functions in wireless communication networks \cite{WDG}, in which spatial network models are used to study the connection in these networks.

On the other hand, one of the tools for investigating the complexity and inherent information of systems, including connectivity in spatial networks, is the entropy graph. The entropy of a graph is an information theoretic functional which is defined on a graph with a probability density on its vertex set. This functional was originally proposed by J. K\"orner in 1973 to study the minimum number of codewords required for representing an information source \cite{Ko}. Evaluation of the distribution of rainfall stations,  wireless sensors networks, fractal traffic flows, evaluation of  network resilience, chemical networks by topological Indices, social networks, software risk analysis and communication network topology inference are among the applications of using entropy in the study of network systems \cite{ CD,SBB}. This diverse range of applications has led to the definition of several measures of entropy that allow one to quantify the complexity of a graph. Among these criteria, Shannon entropy \cite{SHa}, Renyi entropy and generalized entropy of Tsallis in non-extensive statistical mechanics \cite{TSa} can be mentioned. Tsallis or nonextensive entropy is one of the generalizations of Boltzmann-Gibbs entropy \cite{100} that has been interested in the last decades. Tsallis entropy has many applications in studying systems such as interconnected systems, cold atoms, plasmas, black holes, earthquakes , economics,  and so on \cite{104,CS}. 

Tsallis entropy of spatial network ensembles is definitely a new idea compared to the works done in describing the entropy of spatial networks and  the entropy of spatial network ensembles, which we discuss in this paper. Here, we develop the paper reported in \cite{CDG} based on Tsallis's work. That is, Tsallis entropy of spatial network ensembles is determined for the soft random geometric graph. These purposes give us a motivation to organize this paper as follows. In sections \ref{sec2} and \ref{sec3}, we describe the entropy values and the basic model used in this paper, respectively. According to the information theory, we present an upper bound for Tsallis entropy of ensembles. Then, we evaluate Tsallis entropy from a conditional point of view. According to this formula, we obtain the connection function that maximizes Tsallis entropy under general constraints. These main results are discussed in Section \ref{sec4}.  In the following, we describe the bounds of the conditional entropy and Tsallis conditional entropy, which are clearly shown in Section \ref{sec5} by drawing a diagram. In Section  \ref{sec5}, we also apply our stated analytical methods to the study of ad hoc wireless networks.

\section{Entropy and generalized entropies}\label{sec2}
Entropy is a criterion for measuring random variable uncertainty first introduced by Shannon in 1948 \cite{SHa}. Over the past decades, some researchers have extended this concept for measuring uncertainty, among them Rényi \cite{YD} and Tsallis \cite{TSa}.
In this section, we briefly review some essential definitions and results regarding the concept of entropy.
\begin{definition}
If $X$ is a discrete random variable with the probability mass function $p_x=p_X(x)$, then Tsallis entropy of order $\alpha$ in the discrete system is defined as
\begin{eqnarray}\label{Ts}
 H^{T}(p,q)=\frac{1}{1-q}\Big(\sum_{x}{p_x^{q}-1\Big)}=-\sum_{x}{p_{x}^{q}\ln_q{p_{x}}  },
\end{eqnarray}
where for a real number $x\geq 0$ and $q\in R$,
\begin{eqnarray*}
\ln_q{x}= \left \{ \begin{array}{ll}
\ln{x}&q=1\\
\frac{x^{1-q}-1}{1-q}&o.w.
\end{array} 
\right.
\end{eqnarray*}
\end{definition}

The function $\ln_q{x}$ is the mathematical basis for Tsallis statistics and is called the $q$-logarithm, which satisfies the axioms for non-extensive entropies introduced in \cite{105}. It is worth noting that Tsallis entropy of order $q$ contains the Boltzmann-Gibbs statistics, which is also referred to as Shannon entropy when $q\rightarrow 1$.
\begin{definition}
Let $X$ be a discrete random variable drawn according to the probability density function $p_x=p_X(x)$. Shannon entropy of $X$, $H^S(p)$, is defined as the expected value of the random variable $\ln{1/p(x)}$, $\mathbb{E}(\ln{1/p(x)})$, so that we have
\begin{eqnarray}\label{eqS}
 H^S(p)=-\sum_{x}{p_x\ln p_x}.
\end{eqnarray} 
\end{definition}

The extensive Shannon entropy and standard definition of the expectation value are recovered in Relation \ref{eqS}, so $(1-q)$ in Relation \ref{Ts} can be interpreted as a measure of the lack of extensivity of the system. The parameter $q$ is used to adjust the measure depending on the shape of the probability distributions. Using Tsallis entropy, for $q > 1$, high-probability events contribute more to the entropy value than do low-probability events. Therefore, the higher the value of $q$, the greater the contribution of high probability events to the final result \cite{MAM}.
\subsection*{Erd\H os-R\'enyi model }
Let us begin with a brief explanation of our underlying situation. As an example, we consider the Erd\H os-R\'enyi  (ER) random graph model, which was the first model presented in 1959 for modeling complex networks. In this model, each edge exists independently with a probability $p$. In the part $(a)$ of Figure \ref{ERS}, ER is shown with $p= 0.01$ and $200$ nodes. 
If we have $n$ nodes, the total number of edges is equal to $c_n=n(n-1)/2$, so that different combinations of these edges form the graphs in the ensemble $\mathcal{G}$. Therefore, Shannon entropy and Tsallis entropy of the ER ensemble are obtained respectively as
\begin{eqnarray}\label{PD}
H^S(\mathcal{G})=-c_n \big( p \ln p +(1-p) \ln (1-p)\big)
\end{eqnarray}
and,
\begin{eqnarray}\label{PD}
H^T(\mathcal{G})=\frac{c_n}{1-q}\big[p^q+(1-p)^q-1\big].
\end{eqnarray}

\section{The Based Model of  Point Process}\label{sec3}
Consider a set $\Phi$ of $n$ nodes placed in a space $\mathcal{K} \subset \mathbb{R}^d$ with a volume $K = vol(\mathcal{K})$.
The locations of the nodes $x_1,\dots,x_n$ form a point process in $\mathcal{K}$.
\begin{figure}[h]
\begin{center}
\includegraphics[width=\linewidth]{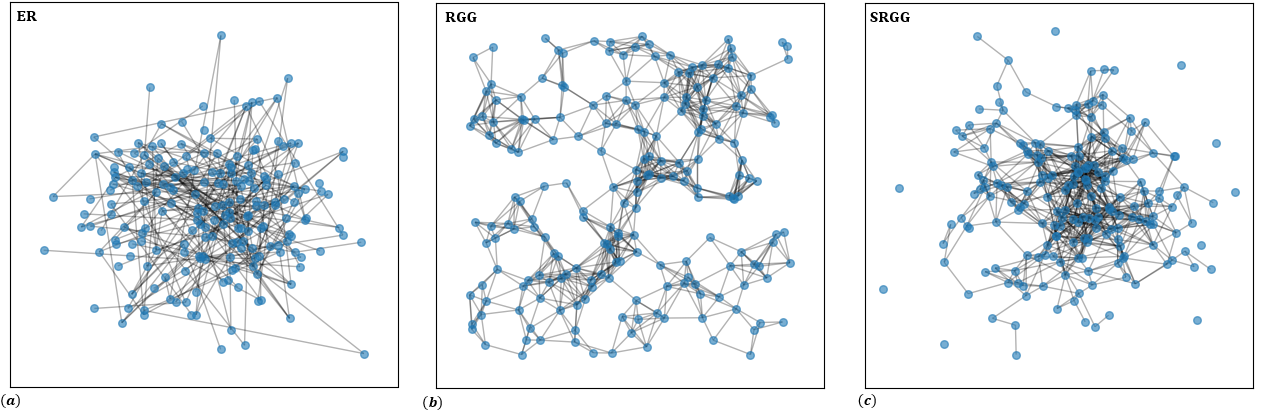}
\end{center}
\caption{$a)$ Random graph, $b)$ Random geometric graph and $c)$ Soft random geometric graph with $n=200$ nodes. }
\label{ERS}
\end{figure}
In this paper, we consider a simple, uniform point process which demonstrates that the spatial distribution of nodes is described by a constant and finite intensity of $\rho=n/K$ nodes per unit volume. Hence, the triple $(\Phi,\mathcal{K},\rho)$ describes a binomial point process. Given two points  $x_i$ and $x_j$  of $\Phi$ (with $i >j$), connect them by an edge with probability $g({x_{i,j}})=g(|x_i-x_j|)$  independently of all other pairs of points. The symbol $|.|$ denotes the Euclidean norm. This yields an undirected random graph $G(\Phi )\in \mathcal{G}$ with a vertex set $\Phi$, where $\mathcal{G}$ denotes the set of all graphs. Actually, Gilbert's graph or the random geometric graph is the special case with the connection function $g(x)= I_{\{|x|\leq x_0\}}$, where $x_0$ denotes the maximum connection range \cite{01}, and it is shown in the part $(b)$ of Figure \ref{ERS}. For other connection functions $g$, e.g., monotonically decreasing functions in the distance argument, we have the soft random geometric graph \cite{41}, as shown  in the part $(c)$ of this figure.
\section{Connection probability by Tsallis entropy maximization}\label{sec4}
As we mentioned earlier, we are interested in quantifying the Tsallis entropy of the ensemble $\mathcal{G}$. Consider $P(G)$ as the probability of occurrence of graph $G$ for each graph $G\in \mathcal{G}$. It is clear that this probability depends on the two items of the spatial distribution of the nodes and the pair connection function $g$. The definition of Tsallis entropy of $\mathcal{G}$ is defined as
\begin{eqnarray}
 H^{T}(\mathcal{G})=\frac{1}{1-q}\Big(\sum_{G\in \mathcal{G}}{P(G)^{q}-1\Big)}.
\end{eqnarray}
It is noteworthy that the existence of the edge $(i,j)$ is a function of the distance pair $x_{ i,j}$. Without losing generality, we consider the graph as a random variable G with support $\mathcal{G}$ whose distribution is equivalent to the distribution of the edge set. In this case, let $Y_{ i,j}$ represent a Bernoulli random variable that models the existence of an edge $(i,j)$ which is a function of the pair distance $x_{ i,j}$. We can write Tsallis entropy of the network ensemble as the joint Tsallis entropy of the sequence $\{ Y_{ i,j} \}$, i.e.,
\begin{eqnarray}\label{e1}
H^T(\mathcal{G})=H^T(Y_{1,2},Y_{1,3},\dots,Y_{n-1,n})
\end{eqnarray}
The random variable $Y_{ i,j}$ is physically related to nodes $i$ and $j$, but it should be expressed that $P(Y_{ i,j}= 1)\neq g(x_{ i,j} )$, since this probability depends on the pair distances $x_{ i,j}$ and the node locations \cite{CDG}. More precisely, the probability of $Y_{ i,j}$  is averaged over all pair distances, which is obtained as
\begin{eqnarray}\label{PDP}
P(Y_{ i,j}= 1)=\int_{0}^D f(x_{i,j})g(x_{i,j})dx_{i,j},
\end{eqnarray}
where, $D=sup_{{\bold{x}_i,\bold{x}_j}\in \mathcal{K}}|\bold{x}_i-\bold{x}_j|$ and $ f(x_{i,j})$ is the pair distance probability density corresponding to nodes $i$ and $j$. Considering that the functions $f(x_{i,j})$ and $g(x_{i,j})$ are identical for all $i \neq j$, we can call Relation \ref{PDP} $\bar{p}$ \cite{CDG}.
\subsection*{Tsallis conditional entropy}
To study the effect of a particular embedding on Tsallis entropy of the SRGG ensemble, we consider the information-theoretic concept of Tsallis conditional entropy. Here, given the distribution of vertex locations, Tsallis conditional entropy of the graph ensemble is defined as
\begin{eqnarray}\label{8}
H^T(\mathcal{G}|\mathcal{X})&=& \left \langle H^T(\mathcal{G}|\bold{x}_1,\dots,\bold{x}_n)\right \rangle \\ \nonumber
&=& \left \langle H^T(\mathcal{G}|x_{1,2},x_{1,3},\dots,x_{n-1,n})\right \rangle,
\end{eqnarray}
where the notation $\left \langle.\right \rangle$ denotes the average of a function over the vertex positions. Also, this quantity is written in terms of an average over the pair distances $x_{i,j}$ with the joint pair distance density $f(x_{1,2},x_{1,3},\dots,x_{n-1,n})$. \\
Now, we state and prove a main result concerning the lower and upper bounds of Tsallis entropy of the graph ensemble.
\begin{theorem}\label{THE3}
Consider the graph G with support $\mathcal{G}$ and Bernoulli random variables of $Y_{ i,j}$. Then, the following relation is valid for Tsallis entropy
\begin{eqnarray}\label{9}
H^T(\mathcal{G}|\mathcal{X})\leq H^T(\mathcal{G})\leq c_n H^T_2(\bar{p}).
\end{eqnarray}
\end{theorem}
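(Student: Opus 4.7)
The plan is to prove the two inequalities in (\ref{9}) separately: the upper bound from subadditivity of the joint Tsallis entropy together with the identification of marginals, and the lower bound from the ``conditioning reduces entropy'' principle in the Tsallis setting.

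For the upper bound $H^T(\mathcal{G})\le c_n H^T_2(\bar p)$, the first step is to use (\ref{e1}) to rewrite $H^T(\mathcal{G})$ as the joint Tsallis entropy of the Bernoulli indicators $\{Y_{i,j}\}_{i<j}$. Next I would establish the subadditivity bound
\begin{equation*}
H^T(Y_{1,2},\dots,Y_{n-1,n}) \;\le\; \sum_{i<j} H^T(Y_{i,j}),
\end{equation*}
by iterating the Tsallis chain rule $H^T(X,Y)=H^T(X)+H^T(Y\mid X)$ and using non-negativity of the conditional Tsallis entropy (together with the pseudo-additivity correction term when independence fails). By (\ref{PDP}) every $Y_{i,j}$ is marginally Bernoulli with parameter $\bar p$, so each summand equals the binary Tsallis entropy $H^T_2(\bar p)=\frac{1}{1-q}(\bar p^{q}+(1-\bar p)^{q}-1)$; there are $c_n=n(n-1)/2$ identical terms, which yields the claimed right-hand bound.

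For the lower bound $H^T(\mathcal{G}\mid\mathcal{X})\le H^T(\mathcal{G})$, I would invoke Jensen's inequality applied to the $q$-power sum $\sum_G P(G)^q$. Conditioning on the locations replaces $P(G)$ by $P(G\mid\mathcal{X})$, and averaging over $\mathcal{X}$ with the joint pair-distance density is a convex (resp.\ concave) operation on $t\mapsto t^{q}$ in the regime that matches the sign of $1/(1-q)$; after combining with the prefactor in (\ref{Ts}) this yields exactly $\langle H^T(\mathcal{G}\mid\mathcal{X})\rangle\le H^T(\mathcal{G})$. In other words, the Tsallis conditional entropy defined in (\ref{8}) inherits the usual ``conditioning never increases entropy'' inequality from its Shannon counterpart once the appropriate convexity is invoked.

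The main obstacle is the subadditivity step, because unlike Shannon entropy the Tsallis functional is only pseudo-additive under independence: the precise chain rule carries an extra $(1-q)H^T(X)H^T(Y\mid X)$ correction whose sign depends on whether $q$ exceeds or falls below $1$. I would handle this either by working within a single monotone regime of $q$ and tracking signs carefully, or by appealing directly to the $q$-logarithmic inequality $-\ln_q P(G)\le -\sum_{i<j}\ln_q P(Y_{i,j})$ applied to the definition (\ref{Ts}). By contrast, the conditioning step and the marginal identification via (\ref{PDP}) are essentially book-keeping.
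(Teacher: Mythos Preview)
Your approach is essentially the same as the paper's: both inequalities are obtained exactly as you outline, via subadditivity of the joint Tsallis entropy for the upper bound and the ``conditioning reduces entropy'' property for the lower bound, with the marginal identification $P(Y_{i,j}=1)=\bar p$ turning the sum into $c_n H^T_2(\bar p)$. The only point the paper makes explicit that you leave implicit is the regime: both inequalities are asserted for $q\ge 1$ (citing Furuichi for the Tsallis versions of subadditivity and of conditioning reducing entropy), which is precisely the ``single monotone regime of $q$'' you allude to when discussing the sign of the pseudo-additive correction. Your fallback route via $-\ln_q P(G)\le -\sum_{i<j}\ln_q P(Y_{i,j})$ is shakier, since $P(G)$ is not the product of the marginal edge probabilities (the $Y_{i,j}$ share node locations and are only \emph{conditionally} independent), so stick with the subadditivity argument.
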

\begin{proof}
For the right inequality in Relation \ref{9}, using the extended relation, often called independence bound on entropy in information theory and Relation \ref{e1}, we have for $q \geq 1$,
\begin{eqnarray}\label{e11}
H^T(\mathcal{G})&\leq& \sum_{i<j}H^T(Y_{i,j})\\ \nonumber
&\leq& c_n H^T_2(\bar{p}),
\end{eqnarray}
where the binary Tsallis entropy function, $H^T_2(\bar{p})$, is
\begin{eqnarray}
H_2^T(\bar{p})=\frac{1}{1-q}\big[\bar{p}^q+(1-\bar{p})^q-1\big].
\end{eqnarray}
Note that if $q=1$ and the random variables are independent, the equality holds. It is clear that the bound is maximized when $\bar{p} = 1/2$.

 For the left-hand  inequality in Relation \ref{9}, a fundamental result of information theory states that conditioning reduces uncertainty \cite{CT}. This expression is valid for Tsallis entropy for $q \geq 1$, which means that the conditioning reduces Tsallis entropy \cite{F}. Therefore, we have
\begin{eqnarray}\label{e110}
H^T(\mathcal{G}|\mathcal{X})\leq H^T(\mathcal{G}).
\end{eqnarray}
As a result, according to Relations \ref{e11} and \ref{e110}, the theorem is proved.
\end{proof}
\begin{remark}\label{Rem4}
Using Theorem $3.7$ in \cite{F} and for $q \geq 1$, we can write
\begin{eqnarray}\label{e2}
H^T(\mathcal{G}|x_{1,2},\dots,x_{n-1,n}) &=& H^T(Y_{1,2}|x_{1,2},\dots,x_{n-1,n}) \nonumber \\ 
&\qquad+& H^T(Y_{1,3}|Y_{1,2}, x_{1,2},\dots,x_{n-1,n}) +\dots \nonumber \\ 
&\qquad+&H^T(Y_{n,n-1}|Y_{n-1,n-2},\dots, x_{1,2},\dots,x_{n-1,n}) \nonumber \\ 
&\qquad \leq& \sum_{i<j} H^T(Y_{i,j}|x_{i,j}),
\end{eqnarray}
The edge states $\{Y_{i,j}\}$ are independent conditioned on the pair distances $\{x_{i,j}\}$. Thus, the equality holds when $q\rightarrow 1 $. According to Relation \ref{8} and averaging the right-hand side of Relation \ref{e2} over the density $f(x_{1,2},\dots,x_{n-1,n})$ leads to
 \begin{eqnarray}\label{e3}
H^T(\mathcal{G}|\mathcal{X})\leq c_n\int_{0}^{D}f(x)H^T_2\big(g(x)\big)dx.
\end{eqnarray}
\end{remark}

It is worth noting that Shannon conditional entropy is given as \cite{CDG}
 \begin{eqnarray}\label{e33}
H^S(\mathcal{G}|\mathcal{X})=c_n\int_{0}^{D}f(x)H_2^S\big(g(x)\big)dx; H_2^S\big(g\big)=g\log(g)-(1-g)\log(1-g).
\end{eqnarray}
\begin{remark}
In the inequality \ref{9}, if the connection function $g(x)=1$ for $0\leq x \leq x_0$ and $g(x)=0$ otherwise, then the lower bound is $H^T(\mathcal{G}|\mathcal{X})=0$. It can be easily seen that Tsallis entropy maximizing function is $g(x)=\frac{1}{2}$. In this case and when $q\rightarrow 1 $
\begin{eqnarray}
H^T(\mathcal{G}|\mathcal{X}) = H^T(\mathcal{G}) = \frac{c_n}{1-q}[{ ( \frac{1}{2}) }^{q-1}-1 ].
\end{eqnarray}
Also, we have $H^S(\mathcal{G}|\mathcal{X}) = H^S(\mathcal{G}) = c_n \ln 2$ when $q$ tends to 1.
\end{remark} 

However, practical spatial networks usually exhibit certain properties, which we may wish to incorporate into the maximization task through some constraints. For example, one of these constraints can be the mean degree of a geometric graph ensemble. In general, we consider a graph ensemble where the pair connection function obeys a set of constraints
\begin{eqnarray}\label{e4}
\int_{0}^{D}\theta_\ell(x)g(x)^q dx=k_{\ell,q},  \ell=1,2,\dots,L
\end{eqnarray}
where $\{\theta_\ell \}$ and $\{k_\ell \}$ are independent of $n$. Now, we find the maximizing function $g(x)$ given these constraints. 
\begin{remark}
Euler-Lagrange equation according to the constraints given in Relation \ref{e3}, when q tends to 1 and Relation \ref{e4} is defined as
\begin{eqnarray}
H^T(\mathcal{G}|\mathcal{X})-\sum_{\ell=1}^{L}\Big[\lambda_\ell \big[\int_{0}^{D}\theta_\ell(x)g(x)^qdx-k_{\ell,q}\big]\Big]=0
\end{eqnarray}
Solving this equation yields the maximum Tsallis entropy connection probability
\begin{eqnarray}\label{mT}
 \tilde{g}(x)=\frac{1}{\Big(1-(1-q)\psi(x)\Big)^{\frac{1}{q-1}}+1},
\end{eqnarray}
where
\begin{eqnarray}
 \psi(x)=\frac{1}{c_nf(x)} \sum_{\ell=1}^{L}\lambda_\ell\theta_\ell(x),
\end{eqnarray}
and $\{\lambda_\ell \}$ are uncertain Lagrange multipliers that depend on the geometric properties of the network. The maximizing pair connection function $ \tilde{g}$ is dependent on the pair distance density $f$ and the constraints captured by $\{\theta_\ell \}$. Of course, by choosing $\{\theta_\ell \}$ as a multiple of the function $f$, we no longer have this dependence.
\end{remark} 

Another result deals with the description of bounds in Shannon conditional entropy and Tsallis conditional entropy. Notably, a bound of Shannon conditional entropy is given in \cite{CBG}.
\begin{theorem}\label{TH5}
In the constraints given in Relation \ref{e4}, if $\theta_\ell(x)=f_X(x)$ for $\ell=1,\dots,L$  be the pair distance probability distribution function for two points in $\Phi$ then the $q$th moment of the pair connection function is defined as
\begin{eqnarray}
\mu_q=\int_0^{D}f_X(x)g(x)^qdx, \qquad          q>0
\end{eqnarray}
then,
\begin{eqnarray}
 H^S(\mathcal{G}|\mathcal{X}) &\geq& n(n-1)(\mu_1-\mu_2)  \\ 
  H^T(\mathcal{G}|\mathcal{X})&\leq& c_n.
\end{eqnarray}
\end{theorem}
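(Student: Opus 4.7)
The plan is to handle the two bounds separately, since they behave quite differently: the Shannon lower bound will come from a pointwise inequality on the binary entropy function $H_2^S$, while the Tsallis upper bound will follow from bounding the binary Tsallis entropy $H_2^T$ uniformly on $[0,1]$ and integrating against the pair-distance density $f_X$.

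For the Shannon bound, I would start from the explicit integral representation in Relation \ref{e33}, namely $H^S(\mathcal{G}|\mathcal{X}) = c_n \int_0^D f_X(x)\, H_2^S(g(x))\, dx$. The key step is the pointwise inequality $H_2^S(u) \geq 2u(1-u)$ on $[0,1]$, which is a classical consequence of $-\ln t \geq 1-t$ applied twice: $-u\ln u \geq u(1-u)$ and $-(1-u)\ln(1-u) \geq (1-u)u$; summing produces the claim. Multiplying by $c_n f_X(x)$ and integrating gives
\begin{eqnarray*}
H^S(\mathcal{G}|\mathcal{X}) \;\geq\; 2 c_n \int_0^D f_X(x)\, g(x)\bigl(1-g(x)\bigr)\, dx \;=\; 2 c_n (\mu_1 - \mu_2) \;=\; n(n-1)(\mu_1 - \mu_2),
\end{eqnarray*}
where the middle equality is exactly the hypothesis $\theta_\ell = f_X$ substituted into the definition of $\mu_q$.

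For the Tsallis bound, I would invoke Relation \ref{e3} from Remark \ref{Rem4}, which already yields $H^T(\mathcal{G}|\mathcal{X}) \leq c_n \int_0^D f_X(x)\, H_2^T(g(x))\, dx$. It then suffices to show that $H_2^T(u) \leq 1$ for every $u \in [0,1]$ when $q \geq 1$. Differentiating $H_2^T(u) = \bigl(u^q + (1-u)^q - 1\bigr)/(1-q)$ gives a unique interior critical point at $u = 1/2$, where $H_2^T(1/2) = (2^{1-q}-1)/(1-q) = \ln_q 2$. To conclude I would verify that $\ln_q 2 \leq 1$ for every $q \geq 1$: setting $v = 1 - q \leq 0$ and using $e^{v\ln 2} \geq 1 + v\ln 2$, one gets $2^v - 1 \geq v\ln 2 \geq v$ (the last step because $v \leq 0$ and $\ln 2 < 1$), so dividing by the negative quantity $v$ flips the inequality and yields $\ln_q 2 \leq 1$. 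Since $f_X$ integrates to $1$ on $[0,D]$, this gives $H^T(\mathcal{G}|\mathcal{X}) \leq c_n \ln_q 2 \leq c_n$.

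The main obstacle will be the uniform bound $H_2^T(u) \leq 1$, because it requires a little care: a crude argument using $x^q \leq x$ only shows $H_2^T \geq 0$, while the sharper bound requires the optimization at $u = 1/2$ plus the elementary estimate for $\ln_q 2$. The Shannon part is essentially a one-line convexity estimate once Relation \ref{e33} is in hand, so no serious difficulty arises there; the only thing to double-check is that the constraint $\theta_\ell = f_X$ is used only to match $\mu_1 - \mu_2$ with the weighted integral of $g(1-g)$, which is immediate.
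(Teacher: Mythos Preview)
Your proposal is correct and follows essentially the same route as the paper: both inequalities are obtained from the pointwise bounds $H_2^S(u)\geq 2u(1-u)$ and $H_2^T(u)\leq 1$, integrated against $f_X$ via Relations~\ref{e33} and~\ref{e3} respectively. Your argument is in fact more complete, since the paper merely asserts the two pointwise bounds while you supply proofs (via $-\ln t\geq 1-t$ for the Shannon case, and via the maximization at $u=1/2$ together with $\ln_q 2\leq 1$ for the Tsallis case).
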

\begin{proof}
To prove the first inequality, it can be shown that
\begin{eqnarray}\label{17}
H_2^S(x)>2x(1-x) ,        \qquad       0<x<1.
\end{eqnarray}
Hence, we have
\begin{eqnarray*}
 H^S(\mathcal{G}|\mathcal{X})=c_n\mathbb{E}[H_2^S(g(x))]&=&c_n\int_0^D f_X(x)H_2^S(g(x))dx\\ \nonumber
&\geq&2c_n\int_0^D f_X(x)g(x)(1-g(x))dx\\ \nonumber
&=&n(n-1)(\mu_1-\mu_2)
\end{eqnarray*}
On the other hand, the Tsallis entropy $H_2^T(x)$ satisfies
\begin{eqnarray}\label{19}
H_2^T(x)\leq 1,        \qquad       0<x<1.
\end{eqnarray}
Therefore, according to relation \ref{19}, 
 \begin{eqnarray}\label{e3}
 c_n\int_{0}^{D}f(x)H^T_2\big(g(x)\big)dx \leq c_n\int_0^D f(x) dx,
\end{eqnarray}
where the second inequality of the theorem is obtained.
\end{proof}

\section{Simulation and studying a practical example}\label{sec5}
In this section, simulation studies are presented related to the results obtained in the previous section. Programs related to this paper are provided using R 4.3.0 and Python software. We applied our analytical methods to study the example of real-world spatial networks, communication networks.
\subsection{Simulation of conditional entropies of Tsallis and Shannon}
In mathematical or physical discussions, there are several connection functions \cite{92}. One of the common choices of connection function is the Rayleigh fading connection function \cite{AKB}, i.e.,
\begin{eqnarray}\label{CF}
g(x)=\beta e^{-{(\frac{x}{x_0})}^\eta}.
\end{eqnarray}
For this function, the path loss power is measured by $\eta>0$. $\beta\in (0,1]$ specifies the edge probability for short-range connections. In this paper, we consider $\beta=1$. This form of connection function is one of the most widely used connection functions in the wireless communication literature \cite{WDG}. It is known that the farther the distance between the nodes, the lower the probability of connecting two points. Note that by letting $\eta \rightarrow \infty$, we recover the hard disk connection function with radius $x_0$. This parameter mathematically controls the stretch of the decaying exponential. In order to calculate the effect that this parameter has on Tsallis conditional entropy in each pairwise connection
\begin{figure}[h]
\begin{center}
\includegraphics[width=\linewidth]{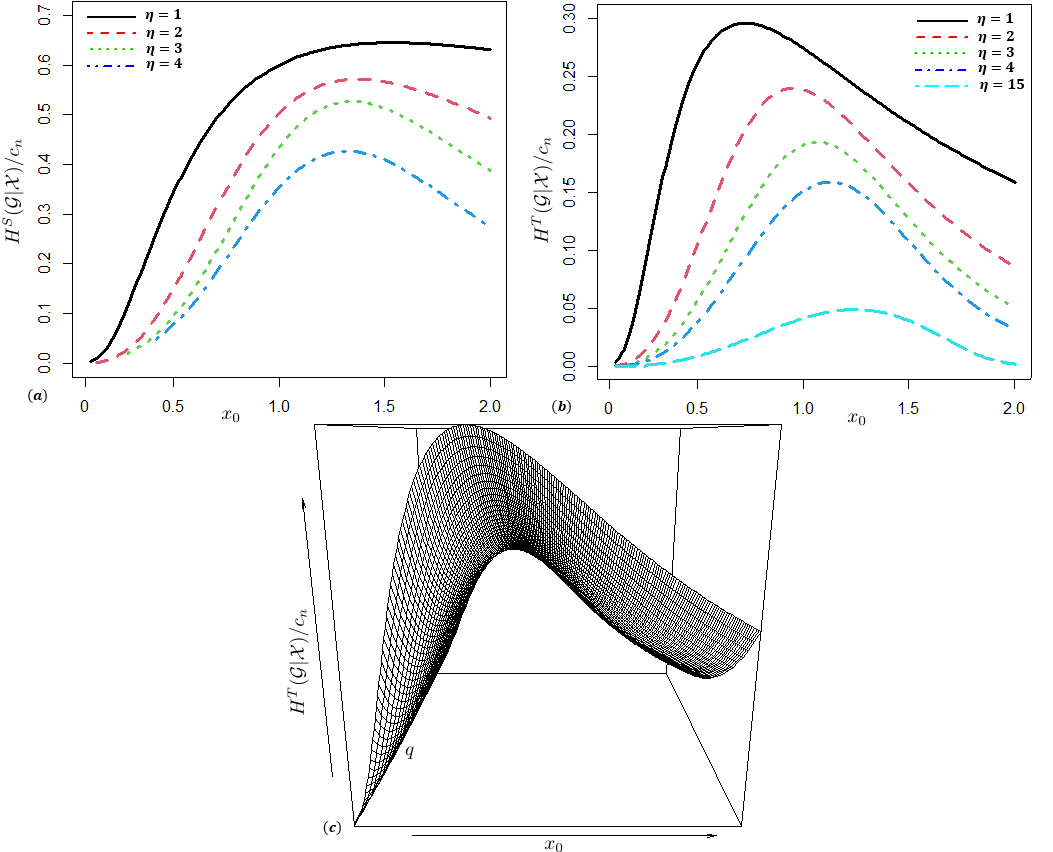}
\end{center}
\caption{$a)$ Shannon conditional  entropy, $H^S(\mathcal{G}|\mathcal{X})/c_n$, $b)$ Tsallis conditional entropy, $H^T(\mathcal{G}|\mathcal{X})/c_n$ and $c)$ the upper bound of Tsallis conditional entropy with respect to $x_0\in (0,2]$ and $q\in (1,4]$.}
\label{fig1}
\end{figure}
, we consider a network located in a sphere with a unit radius in three dimensions. In this case, the pair distance distribution is \cite{CDG, TF}
\begin{eqnarray}
f(x)=3x^2(1-\frac{3}{4}x+\frac{1}{16}x^3).
\end{eqnarray}

Tsallis conditional entropy $H^T(\mathcal{G}|\mathcal{X})$ in Inequality \ref{e3} divided by the total number of possible edges against the typical connection range for a wireless communication network is plotted in the part $(b)$ of Figure \ref{fig1}.  Here, $q$ is chosen very close to 1 so that the equality in relation \ref{e3} is established. This function is plotted for different values of $\eta$. The connection probability $g$ increases with an increase in $\eta$. As a result, it is expected that Tsallis conditional entropy will have a non-increasing trend with increasing $\eta$, which is clearly shown in the part $(b)$ of this figure. In the part $(a)$ of this figure, Shannon conditional entropy in Relation \ref{e33} is drawn for different values $\eta$. In addition to obtaining the results stated in the paper \cite{CDG}, we have performed more recent work based on Tsallis entropy. The remarkable thing about the use of Tsallis entropy is that the observed changes of $H^T(\mathcal{G}|\mathcal{X})/c_n$ in terms of the parameter $\eta$ are similar to the changes observed in $H^S(\mathcal{G}|\mathcal{X})/c_n$, with the difference that this function can be calculated and plotted for values of $\eta>4$ in the range of $x_0\in (0,2]$. 

To investigate how the upper bound of Tsallis conditional entropy changes according to the quantities $x_0\in (0,2]$ and $q\in (1,4]$, the upper bound of the function $H^T(\mathcal{G}|\mathcal{X})$ as a bivariate function of these quantities was calculated and drawn in the part $(c)$ of Figure \ref{fig1}. As can be seen, Tsallis conditional entropy decreases as the values of $q$ and $x_0$ increase.

\subsection{The connection function based on Tsallis entropy and other connection functions}\label{subsec2}
In this subsection, we compare two functions: the pair connection probability \ref{CF} and the maximum Tsallis entropy connection probability \ref{mT} obtained from Lagrange equation. This comparison is shown in Figure \ref{fig3}. 
\begin{figure}[h]
\begin{center}
\includegraphics[width=\linewidth]{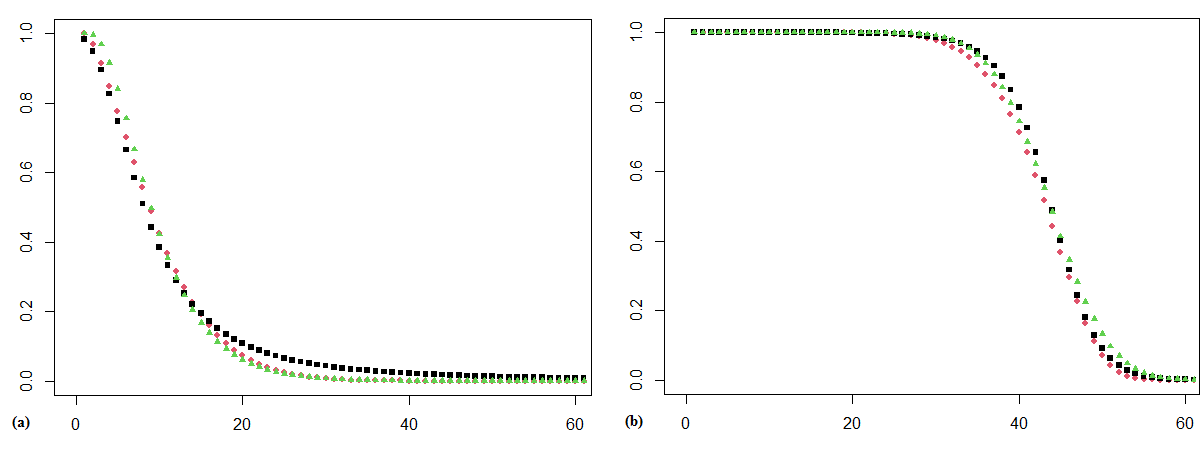}
\end{center}
\caption{The pair connection probability defined in Relation \ref{CF} (diamond),
the maximum Tsallis entropy connection probability (square) and MIMO function (triangle)}
\label{fig3}
\end{figure}
In the part $(a)$, we let $\eta=1.5$, $x_0=0.1$ and $x\in (0,1)$  for Reyleigh connection function with a diamond symbol. The maximum Tsallis entropy connection probability $\tilde{g}$ is plotted for $q=1.4$ with a square symbol. As can be seen, the values of these functions are very close to each other so that the mean squared error of these functions is equal to $0.00049$. The results of these comparisons based on criterion $MSE(p_1(x),p_2(x))=\sum_x(p_1(x)-p_2(x))^2/n$ are shown in Table \ref{tab1}. In the part $(b)$, these functions are plotted for the values of  $\eta=9$, $x_0=2.2$ and $x\in (0,4.5)$ and for the value $q$ very close to 1.  In this case, it is clear that the Reyleigh connection function is extremely close to Tsallis entropy maximizing function. 
\begin{center}
\begin{table}[h!]%
\centering
\caption{ Mean Squared Error.\label{tab1}}%
\begin{tabular*}{280pt}{@{\extracolsep\fill}lcc@{\extracolsep\fill}}
\hline\hline
%\toprule
 &\textbf{$\tilde{g}$} & MIMO    \\
\hline\hline  
%\midrule
 & $(\eta=1.5,x_0=0.1,q=1.4)$  &$(\eta=0.84,x_0=0.1)$    \\
 & 0.00049  & 0.00023    \\
Reyleigh& $(\eta=9,x_0=2.2,q\rightarrow 1)$  & $(\eta=4,x_0=1.6)$    \\
 & 0.00064  & 0.00044   \\
\hline
 &  & $(\eta=0.84,x_0=0.1)$    \\
 &   & 0.00098     \\
 $\quad\tilde{g}$&   & $(\eta=4,x_0=1.6)$     \\
 &   & 0.00021     \\
%\bottomrule
\end{tabular*}
%\begin{tablenotes}
%\end{tablenotes}
\end{table}
\end{center}
In fact, the mean squared error in this case is $0.00064$. Therefore, it can be stated that wireless networks adhering to the model studied here are almost maximally complex.

In addition, using experimental and practical observations, we found a close relationship between the function $\tilde{g}$ and other connection functions. Specifically, we compared the maximum Tsallis entropy connection probability $\tilde{g}$ with another connection function under title the MIMO (multiple input and multiple output) connection function \cite{92}, that is defined as
\begin{eqnarray}\label{MI}
g(x)=e^{-{(\frac{x}{x_0})}^\eta} \Big( 2+(\frac{x}{x_0})^{2\eta}-  e^{-{(\frac{x}{x_0})}^\eta}  \Big).
\end{eqnarray}
This function is shown in Figure \ref{fig3} with a triangle symbol. Comparing the MIMO connection function with the maximum Tsallis entropy connection probability \ref{mT}, the mean square error is obtained in the parts $(a)$ and $(b)$ of Figure \ref{fig3}, respectively, $0.00098$ with $x_0=0.1,\eta=0.84$ and $0.00021$ with $x_0=1.6,\eta=4$. The results of these comparisons are mentioned in Table \ref{tab1}.
\subsection{The optimal value of the maximization of conditional entropies}\label{subsec3}
Continuing the results obtained from simulations and numerical calculations, we seek the optimal value of $x_0$ to maximize the conditional entropies. We calculated the conditional entropy $H^S(\mathcal{G}|\mathcal{X})$ for different $\eta \in [0.5,3.9]$, with $x_0$ belonging to the interval $(0,2]$. We derived the $x_0$ that maximizes the conditional entropy for each $\eta$ and plotted it in Figure \ref{fig4}.
\begin{figure}[h]
\begin{center}
\includegraphics[width=\linewidth]{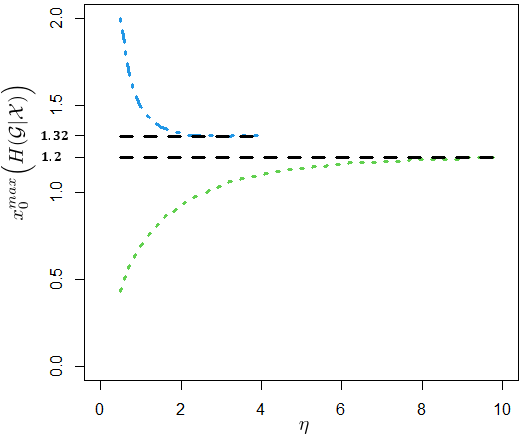}
\end{center}
\caption{The values $x_0$ of the function maximizer $H^S(\mathcal{G}|\mathcal{X})$ (dotdash) and the function maximizer  $H^T(\mathcal{G}|\mathcal{X})$ (dotted) with $q=1.1$ in terms of $\eta$.}
\label{fig4}
\end{figure}
 As it is clear, with an increase in $\eta$, the value $x_0$ of the function maximizer $H^S(\mathcal{G}|\mathcal{X})$ $\Big(=x_0^{max}(H^.(\mathcal{G}|\mathcal{X})\Big)$ decreases and has a non-increasing trend (dotdash). These calculations are also performed for Tsallis conditional entropy and for $x_0$ belonging to interval $(0,2]$, which is shown in Figure \ref{fig4} with the dotted line. In this case, unlike the previous one, with an increase in $\eta$, the value $x_0$ of the function maximizer $H^T(\mathcal{G}|\mathcal{X})$ increases, so we observe a non-decreasing trend. It can be concluded that, in this discussed network, the value $x_0$ for which the functions of Shannon conditional entropy and Tsallis conditional entropy in terms of the different values $\eta$ take their maximum value in the critical interval $(1.2,1.3)$.
\subsection{Simulation of  the conditional entropies bounds}
In Section \ref{sec4} of this article, the bounds of Shannon conditional entropy and Tsallis conditional entropy are stated in the form of Theorem \ref{TH5}, and these results are given in Relations \ref{17} and \ref{19}, respectively. The function $H_2^S(x)/2x(1-x)$ for $x \in (0,1)$ is plotted in Figure \ref{fig5} with the dashed line and above the solid line $y=1$. 
\begin{figure}[h]
\begin{center}
\includegraphics[width=\linewidth]{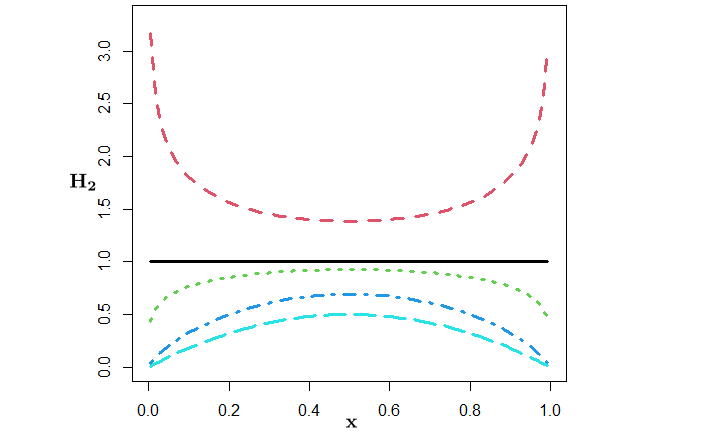}
\end{center}
\caption{$H_2(x)/2x(1-x)$ for $x \in (0,1)$(dashed line) and  $H_2^T(x)$  for $q=0.2$ (dotted line), $q= 0.999$ (dotdash line) and $q=2$ (longdash line).}
\label{fig5}
\end{figure}
Also, the function $H_2^T(x)$ is calculated for three different values of $q=0.2, 0.999, 2$ and shown with dotted, dotdash and longdash lines, respectively. As can be seen, the values of Tsallis entropy for each value $q$ are placed under the line $y=1$. Based on these numerical results, the results stated in Theorem \ref{TH5} are clearly investigated.

\section{ Conclusion}
We propose a method based on Tsallis entropy for analyzing the complexity of spatial network ensembles. We study the components of the soft random geometric graph on point process, which is obtained by connecting the points with a probability that depends on their relative position. We use the concepts of Tsallis entropy and Tsallis conditional entropy in this model. The obtained results can be summarized as follows:
\renewcommand{\labelenumi}{\Roman{enumi})}
\begin{enumerate}
\item  We stated and proved a main result concerning the lower and upper bounds of Tsallis entropy of the graph ensemble in Theorem \ref{THE3}.
\item Another result deals with the description of the bound in Shannon conditional entropy and Tsallis conditional entropy in Remark \ref{Rem4} and Theorem \ref{TH5}.
\item  The connection function that provides the maximum Tsallis conditional entropy under general constraints is calculated.  
\item These analytical methods are used to study ad hoc wireless networks. We compare the function obtained from the Lagrange equation with the pair connection probability and the MIMO connection function in Subsection \ref{subsec2}.  Experimental evaluations show the ability of the results in estimating the maximum complexity of wireless networks. 
\item As another outcome, using numerical calculations, we seek the optimal value $x_0$ to maximize the conditional entropies  in Subsection \ref{subsec3}.
\end{enumerate}

%% The Appendices part is started with the command \appendix;
%% appendix sections are then done as normal sections
%% \appendix
%% \label{}
%% If you have bibdatabase file and want bibtex to generate the
%% bibitems, please use
%%
 % \bibliographystyle{elsarticle-num.bst} 

% \bibliography{elsarticle-num.bst}

%% else use the following coding to input the bibitems directly in the
%% TeX file.

%\begin{thebibliography}{33}

%\bibitem{9l}
%% Text of bibliographic item

%\bibitem{K}

%

%\end{thebibliography}
\end{document}